
\documentclass[letterpaper, 10 pt, conference]{ieeeconf}  

\pdfminorversion=4
\IEEEoverridecommandlockouts                            
\overrideIEEEmargins                                      




\usepackage{ulem} 

\newcommand{\red}[1]{{#1}}
\newcommand{\blue}[1]{{#1}}
\newcommand{\cyan}[1]{{#1}}

\newcommand{\brown}[1]{{#1}}
\newcommand{\green}[1]{{#1}}


\title{\LARGE \bf
Constructing Dynamic Feedback Linearizable Discretizations.
}

\author{Ashutosh Jindal$^{1}$, Florentina Nicolau$^{2}$, David Mart{\'\i}n Diego$^{3}$ and Ravi Banavar$^{4}$
\thanks{$^{1,4}$
        Systems and Control Engineering, Indian Institute of Technology Bombay, Mumbai, 400076, Maharashtra, India  $^1${\tt\small jindal.ashutosh21@gmail.com}, $^3${\tt\small banavar@iitb.ac.in} }
\thanks{$^{2}$  Quartz EA7393, École Nationale Supérieure de l'Électronique et de ses Applications, Cergy, France
{\tt\small florentina.nicolau@ensea.fr}}
\thanks{$^{3}$ Instituto de Ciencias Matem\'aticas (CSIC-UAM-UC3M-UCM), Calle Nicol\'as Cabrera 13-15, 28049 Madrid, Spain
{\tt\small david.martin@icmat.es}}
}
\usepackage{mymacros}
\usepackage{ragged2e} 
\begin{document}
\maketitle
\thispagestyle{empty}
\pagestyle{plain}

\begin{abstract}
Dynamic feedback linearization-based methods allow us to design control algorithms for a fairly large class of nonlinear systems in continuous time. However, this feature does not extend to their sampled counterparts, i.e., for a given dynamically feedback linearizable continuous time system, its numerical discretization may fail to be so. In this article, we present a way to construct discretization schemes (accurate up to first order) that result in schemes that are feedback linearizable. This result is an extension of our previous work, where 
we had considered only static feedback linearizable systems. The result presented here applies to a fairly general class of nonlinear systems, in particular, our analysis applies to both endogenous and exogenous types of feedback. While the results in this article are presented on a control affine form of nonlinear systems, they can be readily modified to general nonlinear systems. %

\end{abstract}

\pagenumbering{roman}



\section{Introduction}
Most engineering systems, from unmanned aerial vehicles to quadrotors, mobile robots, and electric motors are inherently nonlinear in their dynamic behavior. Control techniques for such nonlinear systems are often system-specific. However, linear-time-invariant systems (LTI), that are well studied in the literature, both in continuous-time, as well as discrete time, have many standard control 
algorithms such as  \cyan proportional-integrate-derivative (PID), pole placement and state-feedback, \blue{see e.g., \cite{trentelman2012control,brockett2015finite,rao1997linear} and the references therein}.
\cyan{Although LTI systems are commonly discussed in theory}, in practice, most physical systems have nonlinearities present in them. Since designing control methods for such nonlinear systems is not that straightforward, one of the techniques used to design control \cyan{laws for a certain class of control systems is based on static} \emph{feedback linearization} (FL). The FL method involves \blue{compensating} the system nonlinearities by
\cyan{changing the coordinate system and applying an invertible static feedback transformation (that can be interpreted as a change of coordinates in the control space, depending on the state) such that the transformed} system dynamics take the form of
an LTI system. This allows us to lift linear control methods and helps in synthesizing control \cyan{laws} for \emph{feedback linearizable} nonlinear systems.

\cyan{Many systems  of engineering interest}
\green{such as the induction motor \cite{dynind}, and the wheeled robot \cite{DELUCA2000687} cannot be linearized by static feedback linearization alone. However, such systems \cyan{can be rendered static feedback linearizable by the application of a dynamic compensator.}
A bunch of such \cyan{mechanical} systems are cataloged in \cite{murray1995differential}.In contrast to static feedback, for a \textit{dynamic feedback},  or \textit{dynamic compensation}, \cyan{the original controls are not computed from the new ones
by simply static functions, but through a dynamic system which has a
certain state.} Hence dynamic feedback involves extending system states such that the augmented system becomes \cyan{static} feedback linearizable, thus extending the notion of \cyan{static} feedback linearizability to a larger class of systems called the \textit{dynamic feedback linearizable systems}.}
 Dynamic feedback linearizability is closely related to the notion of differential flatness
  \cite{fliess1995flatness, fliess1999lie}. Indeed, flat systems are linearizable via endogenous dynamic feedback, see \cite{fliess1995flatness, levine2009analysis, martin1992contribution,martin2001flat}.
  The analysis present in this paper is general enough and does not require strong assumptions on the underlying dynamics and types of feedback considered. In particular, we do not restrict our analysis to the class of endogenous dynamic feedback and the \cyan{presented results are actually valid for both endogenous and exogenous dynamic feedback.} 

  
The application of dynamic feedback linearizability (especially of differential flatness) in addressing engineering challenges has experienced notable growth in recent years. Control design based on dynamic feedback linearizability was applied for important problems in control theory such as motion planning, constructive controllability, or trajectory tracking,  as shown by numerous works (see, e.g., {\cite{della2020model, kolar2017time, mokhtari2004dynamic, messaoudi2023flatness, sun2022comparative, tang2011differential}}).


\color{black}




While most systems evolve in continuous-time, control design and implementation are invariably done in the digital domain. To implement such continuous-time control laws digitally, the dynamics must be discretized. This is often done synchronously by using sample and hold techniques. The sensor values are read at regular intervals (instead of a continuous measurement) and a piecewise constant control (with the control value held constant between two successive intervals) is applied to the actuators. Furthermore, to study dynamical systems digitally, one needs to evaluate the evolution of the system over two successive intervals. Often, such an evolution is unavailable in a closed-form analytical expression and is to be approximated numerically. Some of the commonly used numerical integrators are Eulerian \brown schemes, Runge-Kutta-based methods, etc.

Sample and hold restricts the choice of available controls to the set of piecewise controls, and this in general does not preserve the feedback linearizability property \blue{of the original continuous-time control system}. In other words, a given \brown{continuous-time system that is (static or dynamic)} feedback linearizable, \green{it may} not remain feedback linearizable after discretization. It has been established in \cite{grizzle1988feedback} that feedback linearizability is not always preserved under discretization and is also dependent on the choice of discretization.

In \cite{retr_disc_map}, we show that for a continuous-time system that can be linearized by static feedback, using discretization maps one can construct discretization schemes that are accurate up to first-order and preserve feedback linearizability. Such a discretization scheme allows us to leverage the feedback linearization-based methods to design control for the nonlinear systems. In this article we extend the results of \cite{retr_disc_map} to systems linearizable by dynamic feedback. For a given dynamic feedback linearizable system, we present a systematic way to construct discretization schemes on the extended system such that the resulting discrete-time system is feedback linearizable in the discrete-time.

\section{Static and Dynamic Feedback Linearization}
Let $\stateset\subset\R[n]$ and $\controlset\subset\R[m]$ be nonempty and open
\cyan{(more
generally, $n$-dimensional  and an $m$-dimensional manifolds, respectively).}
Let $\stateset\ni x\mapsto f(x)\in\R[n]$, and $\stateset\ni x\mapsto g_i(x)\in\R[n]$ for all $1\leq i\leq m$ be sufficiently smooth. Then a control-affine continuous-time system evolving on $\stateset\times\controlset$, with $m$ \cyan{inputs} is given by a differential equation
\begin{equation}
\label{ctssys}
\begin{split}
\dot{x}(t) &= f(x(t))+\sum_{i=1}^m g_i(x(t))u_i(t)\\
&= f(x) + g(x)u
\end{split}
\end{equation}
where \blue{$x \in\stateset$, $[u_1\ldots u_m]^\top\eqqcolon u\in\controlset$} denote the system state and control input, respectively. \brown{We assume that the vector fields $g_i$ are everywhere independent.}
\cyan{Note that we consider the control-affine case since for most engineering applications, the dynamics of the plant can be modeled with control-affine systems. However, all results are valid for general control systems (nonlinear with respect to the control) of the form $\dot x = F(x,u)$.}
\cyan{Most notions recalled in this section  are commonly covered in classical nonlinear control theory textbooks (for a comprehensive overview, see, e.g., the recent textbook \cite{lee2022linearization}
that specifically focuses on linearization of nonlinear control system, and the references therein).}


\green{
\begin{defn}[Static Feedback Linearization]
\label{def: static FL}
For some $x_0\in\stateset$, let $\open{x_0}\ni x_0$ be an open \blue{neighborhood of  $x_0$} in~$\stateset$. System~\eqref{ctssys} is \textit{static feedback linearizable} on $\open{x_0}$, if there exists a state transformation
\begin{equation}
    \open{x_0}\ni x\mapsto \phi(x)\eqqcolon z\in\open{z_0},
\end{equation}
where $z_0\coloneqq \phi(x_0)$ and $\phi$ is a diffeomorphism onto its image $\phi(\open{x_0})\eqqcolon \open{z_0}$ and a static feedback
\begin{multline}
\label{stat_feed}
\open{x_0}\times\R[m]\ni (x,v)\mapsto\\ \alpha(x)+\beta(x)v\eqqcolon u\in \controlset\subset\R[m]
\end{multline}
with $\beta(x)\in\R[m\times m]$ nonsingular for all $x\in\open{x_0}$, such that~\eqref{ctssys} is \cyan{transformed into} the following linear time invariant system:
\begin{equation}
    \dot{z}(t) = Az(t)+Bv(t),
\end{equation}
where  $A\in\R[n\times n]$, and  $B\in\R[n\times m]$  are constant matrices such that $ A \phi(x) = \D{\phi}(x)\cdot\big(f(x)+g(x)\alpha(x)\big)$,  $B = \D{\phi}(x)\cdot g(x)\beta(x)$.
\end{defn}
}
%
\begin{rmk}
For control affine systems such as \eqref{ctssys}, the notion of feedback linearization is global with respect to control, i.e, for all $v\in\R[m]$ (such that $u\in\controlset$) transformation $v\mapsto\alpha(x)+\beta(x)v\eqqcolon u$ is well-defined. For general nonlinear system $\dot{x} = F(x,u)$, the general nonlinear feedback $u = \gamma(x,v)$ is defined locally with respect to both control and state.
\end{rmk}
For necessary and sufficient conditions under which a given system is (static) feedback linearizable see
  \cyan{\cite{l1980linearization, hunt1981linear} (see also,
 \cite{isidori1982feedback, book:isidori1995, brockett1978feedback} for related results).}

\cyan{The above class of transformations can be enlarged by considering dynamic feedback, as the following example shows.}

 \begin{example}
Consider the following example 
\label{ex_1}
\begin{equation}
\label{unicycle}
\pmat{\dot x_1\\\dot x_2\\\dot x_3\\\dot x_4} = 
\pmat{x_2+2x_2x_3\\x_3\\0\\0}+\pmat{0\\0\\1\\0}u_1+\pmat{2x_2x_4\\x_4\\0\\1+x_3}u_2,
\end{equation}
\brown{around any $x_0\in \R[4]$ such that $(x_{20}, x_{30}, x_{40})\neq (0, -1, 0)$,} 
where $x\coloneqq(x_1,x_2,x_3,x_4)\in\R[4],$ and $u\coloneqq(u_1,u_2)\in\R[2]$ denotes the system state and control input respectively. \cyan{It can be easily shown that~\eqref{unicycle} does not satisfy the
necessary and sufficient conditions for static  feedback linearizability \cite{l1980linearization, hunt1981linear}, thus it is not static  feedback linearizable.}
However, if one instead considers the following \cyan{dynamic} precompensator
\begin{equation}
\begin{split}
u_1 &= \extendedcontrol_1\\
u_2 &= w\\
\dot{w}& = \extendedcontrol_2\\
\end{split}
\end{equation}
and the extended system dynamics given by
\begin{equation}
\begin{split}
\label{unicycle_extended}
\pmat{\dot x_1\\\dot x_2\\\dot x_3\\\dot x_4\\\dot{w}} &= 
\pmat{x_2+2x_2(x_3+x_4w)\\x_3+x_4w\\0\\(1+x_3)w\\0}+\pmat{0\\0\\1\\0\\0}\extendedcontrol_1+\pmat{0\\0\\0\\0\\1}\extendedcontrol_2,
\end{split}
\end{equation}
then one can see that under the coordinate transformation \brown{(locally invertible around any $(x_0, w_0)$ such that $1+x_{30} - w_0 x_{40} \neq 0$)}
\begin{multline*}
(z_1,z_2,z_3,z_4,z_5)\coloneqq \extdiff(x_1,x_2,x_3,x_4,w)\\
\coloneqq (x_1-x_2^2,x_2,x_3+x_4w,x_4,(1+x_3)w)
\end{multline*}
and \brown{the invertible} static feedback
\begin{equation*}
\pmat{\extendedcontrol_1\\\extendedcontrol_2}\coloneqq \pmat{1&x_4\\ w& 1+x_3}^{-1}\pmat{v_1-(1+x_3)w^2\\v_2},
\end{equation*}
where $ v \coloneqq (v_1,v_2)$ is the modified control input,~\eqref{unicycle_extended} is equivalent to the following LTI system
\begin{equation}
\label{unicycle_lin}
\begin{split}
\dot{z_1}&=z_2\\
\dot{z_2}&=z_3\\
\dot{z_3}&=v_1\\
\dot{z_4}&=z_5\\
\dot{z_5}&=v_2
\end{split}
\end{equation}
and therefore, \brown{the original system~\eqref{unicycle} is dynamic feedback linearizable around $(x_0, u_0)$ such that $1+x_{30} - u_{20} x_{40} \neq 0$)}.
\end{example}


\begin{defn}[Dynamic Compensator]
For some given $w_0\ni\R[q]$ (set to zero without loss of generality), let $\open{w_0}\ni w_0$ be \blue{an open neighborhood of $w_0$} in $\R[q]$, and $\extendedcontrol\in\R[m]$, let $\alpha,\beta, \gamma,\delta,$ be smooth maps on $\open{x_0}\times\open{w_0}$ mapping into suitable codomains, then a dynamic compensator is given by
\begin{equation}
\label{dyn_comp}
\begin{split}
   \dot w &= \gamma(x,w)+\delta(x,w)\extendedcontrol,\\
   u &= \alpha(x,w) +\beta(x,w)\extendedcontrol.  
\end{split}    
\end{equation}
where $\extendedcontrol\in\R[m]$ and $\alpha,\beta,\gamma,\delta$ are sufficiently smooth.
\end{defn}
\begin{rmk}
Note, apart from regularity assumptions regarding smoothness, we do not assume any additional requirements on $\alpha,\beta,\gamma,\delta$. Further, the affine structure of the compensator is only guaranteed for control-affine systems of type~\eqref{ctssys}. For a general nonlinear system, the compensator is defined by general nonlinear maps of the type $\dot w = \Lambda(x,w, \mu)$ and $\extendedcontrol = \Gamma(x,w,v)$.     
\end{rmk}
\begin{defn}[Dynamic Feedback Linearization]
For some given $x_0\in\stateset$ and $u_0\in\controlset$, let $\open{x_0}$ and $\open{u_0}$ be open. System~\eqref{ctssys} is said to be linearizable by dynamic feedback of type~\eqref{dyn_comp} \brown{if} the extended system  
\begin{equation}
\label{ext_sys}
\begin{split}
\dot{x} &= f(x)+g(x)(\alpha(x,w)+\beta(x,w)\extendedcontrol)\\
\dot{w} &= \gamma(x,w) +\delta(x,w)\extendedcontrol
\end{split}    
\end{equation}
which when written compactly as 
\begin{equation}
\label{ext_compact}
 \dot{\extendedstate} = F(\extendedstate) + G(\extendedstate)\extendedcontrol   
\end{equation}
where $\extendedstate\coloneqq(x,w)\in\open{x_0}\times\open{w_0}$ and $\extendedcontrol\in\R[m]$ are the states and control of the extended system respectively, 
is static feedback linearizable on $\open{x_0}\times\open{w_0}$, i.e., there exists a 
\begin{multline}
\label{phi_1}
  \R[n]\times\R[q]\supset  \open{x_0}\times\open{w_0}\ni(x,w)\mapsto\\ \extdiff(x,w)\eqqcolon \linstate\in\R[n]\times\R[q]
\end{multline}
and a feedback  
\begin{multline}
\label{psi_1}
\open{x_0}\times\open{w_0}\times\R[m]\supset (x,w,v)\mapsto \\ \extctra(x,w)+\extctrb(x,w)v \eqqcolon \extendedcontrol\in\R[m]
\end{multline}
such that~\eqref{ext_sys} is 
\brown{transformed into} a linear dynamical system 
\begin{equation}
\label{dyn_linear}    
\dot{\linstate}(t) = A\linstate(t)+Bv(t)
\end{equation}
where $\linstate(t) = \extdiff(x(t),w(t))$ for all $t$, and $A\in\R[(n+q)\times (n+q)]$ and $B\in\R[(n+q)\times m]$ are such that $A\extdiff(x)= \D F (\extendedstate)\cdot\big(F(\extendedstate)+G(\extendedstate)\extctra(\extendedstate)\big)$ and $B = \D {G}(\extendedstate)\cdot\extctrb(\extendedstate)$.
\end{defn}

\section{Constructing Feedback Linearizable Discretization}
While both static and dynamic feedback linearization methods 
\brown{transform a given 
nonlinear system into a linear one,} 
these properties are not preserved under-sampling as demonstrated by Grizzle in \cite{grizzle1988feedback} (for information on discrete time feedback linearization see \cite{grizzle1986feedback,grizzle1988feedback,lee1987linearization} and references therein). Moreover, the choice of discretization plays a key role in the feedback linearizability of the resulting discretized system. Consider the explicit Euler discretization of the dynamically compensated system~\eqref{unicycle_extended} from Example~\ref{ex_1}: 
\begin{equation}
\label{uni_dis_eul}
\pmat{x^1_{k+1}\\x^2_{k+1}\\x^3_{k+1}\\x^4_{k+1}\\w_{k+1}} = \pmat{x^1_{k}\\x^2_{k}\\x^3_{k}\\x^4_{k}\\w_{k}}+h\pmat{x^2_k+2x^2_k(x^3_k+x^4_kw_k)\\x^k_3+x^4_kw_k\\\extendedcontrol^1_k\\(1+x^3_k)w_k\\\extendedcontrol^2_k}
\end{equation}
where for all $k\in\N$, $(x^1_k,x^2_k,x^3_k,x^4_k,w_k)\eqqcolon\extendedstate_k\approx \extendedstate(t_k)$ is the approximated trajectory of~\eqref{unicycle_extended}, $(\extendedcontrol_k^1,\extendedcontrol_k^2)\eqqcolon\extendedcontrol_k \in\R[2]$ is the piecewise constant control input applied over interval $[t_k,t_{k+1}[$, and $h$ is the sampling period with $t_{k+1} = t_{k}+h$. Using \cite[Theorem~3.1]{grizzle1986feedback}, one can check that~\eqref{uni_dis_eul} is not feedback linearizable (see appendix for calculations). 
However, if one chooses an alternate discretization scheme (\brown{whose construction is detailed in Section~\ref{sec:illustrating_ex}}, see~\eqref{ex1_disc_nonlin}), the resulting discrete-time system is then feedback linearizable. 

In \cite{retr_disc_map}, we have demonstrated that for a given (static) feedback linearizable continuous-time system, using discretization maps it is possible to construct discretization schemes, that preserve feedback linearizability. As the main contribution of this article, we now \brown{show} how these earlier results can be extended to dynamic feedback linearizable systems. Before stating the main result, we provide a rapid refresher on the retraction and discretization maps. For more information on these maps one may look into \cite{21MBLDMdD} and references therein.  

\subsection{Retraction and Discretization maps}
Euclidean methods such as the Eulerian scheme give satisfactory performance for systems evolving on linear vector spaces, however for systems evolving on nonlinear manifolds such as $\SO{3}$, such schemes do not guarantee system states to remain on the manifold for all time instants. This results in erroneous performance as the trajectory of the numerically discretized system no longer satisfies the geometric constraints\footnote{\green{Here geometric constraints imply the constraints describing the system manifold, \brown{that we denote in this section by} $M$.}} of the continuous-time system.

Retraction and discretization maps are a class of maps that utilize the geometric properties of the manifold to construct discretizations such that the system states remain on the manifold \red{for all $k\in \N$, where $k$ is the iterating index of the discretized trajectory}.


Let $M$ be an $n$-dimensional manifold (not necessarily associated with system \eqref{ctssys}) and $\T M$ be the associated tangent bundle. Let $\T M\ni(x,\dot{x})\mapsto  \pi_M(x,\dot{x}) = x$ be the canonical projection on $M$ and $0_x\in\T_xM$ be the zero vector in $\T_x M$.  


\begin{defn}[Retraction Maps {\cite{AbMaSeBookRetraction}}]
Consider a smooth map $\retmap\colon \T M\lra M$  and $\retmap_x \eqqcolon \retmap\restrto{\T_x{M}}$ be its restriction onto $\T_{x}M$ then $\retmap$ is a retraction if for all $x\in M$, 
\begin{enumerate}
    \item $\retmap_x(0_x)=x$,  and,
    \item $\T_{0_x}\retmap_x$ is the identity map on $\T_xM$.  
\end{enumerate}
\end{defn}
Retraction maps can be generalized to define the discretization maps as follows:
\begin{defn}[Discretization Maps {\cite{21MBLDMdD}}]
Let $\mathcal{O}\subset \T M$ be an open neighborhood of the zero section of the tangent bundle $TM$. $\mathcal{O}\ni(x,\dot{x})\mapsto \discmap(x,\dot{x})\coloneqq (\discmap^1(x,\dot{x}),\discmap^2(x,\dot{x}))\in M\times M$ is a discretization map if, for any $x\in M$, it satisfies 
\begin{enumerate}
    \item $(x,0_x)\mapsto \discmap(x,0_x) = (x,x)$, and
    \item $\T_{(x,0_x)}\discmap^2-\T_{(x,0_x)}\discmap^1 \colon \T_{(x,0_x)}\T_xM\simeq \T_xM \lra T_xM$ is the identity map on $T_xM$,
\end{enumerate}
where $\T_{(x,0_x)}\discmap^i$ is the tangent map of $\discmap^i$, $i\in\{1,2\}$ at $(x,0_x)\in \T M$, and $\T_{(x,0_x)}\T_xM$ is canonically identified with $\T_xM$. 
\end{defn}
\begin{rmk}
One natural way to construct discretization maps from a retraction map is as follows: Let $\retmap$ be a retraction map on $\T M$, then $\T M\ni(x,\dot{x})\mapsto \discmap(x,\dot{x})\coloneqq (x, \retmap(x,\dot{x}))\in\T M$ is a discretization map on $M$.

Another key feature of the discretization (and retraction) maps is that discretization maps are preserved {under diffeomorphisms} between manifolds. This allows us to lift discretization maps between manifolds. 
\end{rmk}

\begin{prop}[Lift of discretization maps \cite{retr_disc_map}]
\label{r_lift}
Consider two $n$-dimensional manifolds $M$ and $N$. Suppose $M\ni x\mapsto \phi(x)\eqqcolon y\in N$ is a diffeomorphism. Then for a given discretization map $\discmap$ on $M$, $\discmap_\phi := (\phi\times\phi)\circ \discmap\circ T\phi^{-1}$ is a  discretization map on $N$ (see Figure~\ref{fig:my_label}).    
\end{prop}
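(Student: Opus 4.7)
The plan is to verify directly that $\discmap_\phi$ satisfies the two defining conditions of a discretization map on $N$. Throughout, I would exploit the fact that $\phi\colon M\to N$ being a diffeomorphism implies $T\phi\colon TM\to TN$ is a diffeomorphism too, with inverse $T\phi^{-1}$, and that $T\phi$ restricts on each fiber $T_xM$ to the linear isomorphism $T_x\phi\colon T_xM\to T_{\phi(x)}N$. First I would unpack the definition componentwise: writing $y=\phi(x)$ and $\dot{y}\in T_yN$, one has $T\phi^{-1}(y,\dot{y})=(x,T_y\phi^{-1}\cdot\dot{y})$, so if I denote $\dot{x}=T_y\phi^{-1}\cdot\dot{y}$, then
\[
\discmap_\phi^i(y,\dot{y}) \;=\; \phi\bigl(\discmap^i(x,\dot{x})\bigr), \qquad i\in\{1,2\}.
\]

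For the first condition, I would take $\dot{y}=0_y$. Linearity of $T_y\phi^{-1}$ gives $\dot{x}=0_x$, and property (1) of $\discmap$ yields $\discmap^i(x,0_x)=x$. Applying $\phi$ componentwise gives
\[
\discmap_\phi(y,0_y)=(\phi(x),\phi(x))=(y,y),
\]
which is exactly the zero-section condition for $\discmap_\phi$.

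The second condition requires a careful chain-rule computation. By construction, $\discmap_\phi^i = \phi\circ \discmap^i\circ T\phi^{-1}$, so
\[
T_{(y,0_y)}\discmap_\phi^i \;=\; T_x\phi \,\circ\, T_{(x,0_x)}\discmap^i \,\circ\, T_{(y,0_y)}(T\phi^{-1}).
\]
Restricting to the vertical subspace $T_{(y,0_y)}T_yN\simeq T_yN$, the differential $T_{(y,0_y)}(T\phi^{-1})$ acts as the linear map $T_y\phi^{-1}\colon T_yN\to T_xM$ (because $T\phi^{-1}$ is itself linear along fibers, so its vertical derivative coincides with the map itself under the canonical identification). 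Subtracting the two components and invoking property (2) of $\discmap$,
\[
T_{(y,0_y)}\discmap_\phi^2 - T_{(y,0_y)}\discmap_\phi^1
\;=\; T_x\phi\,\circ\,\mathrm{id}_{T_xM}\,\circ\, T_y\phi^{-1}
\;=\; \mathrm{id}_{T_yN},
\]
the last equality following from $T_x\phi\circ T_y\phi^{-1}=\mathrm{id}_{T_yN}$.

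The only delicate point, and the step I would be most careful about, is the identification $T_{(x,0_x)}T_xM\simeq T_xM$ (and likewise on $N$) used to make sense of the tangent-identity condition, and verifying that this identification intertwines $T_{(y,0_y)}(T\phi^{-1})$ with $T_y\phi^{-1}$. Once this is set up cleanly, both verifications are essentially immediate, and smoothness of $\discmap_\phi$ is inherited from smoothness of $\phi$, $\phi^{-1}$ and $\discmap$.
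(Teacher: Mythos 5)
Your verification is correct. Note that the paper states Proposition~\ref{r_lift} without proof, deferring to the cited reference \cite{retr_disc_map}, so there is no in-paper argument to compare against; your direct fiberwise check of the two defining conditions --- using that $T\phi^{-1}$ is linear on fibers, so its vertical derivative at the zero section is $T_y\phi^{-1}$ itself under the identification $T_{(y,0_y)}T_yN\simeq T_yN$, together with $T_x\phi\circ T_y\phi^{-1}=\mathrm{id}_{T_yN}$ --- is the standard and expected route, and the point you flag as delicate is indeed the only step requiring care.
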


Further for each given discretization map, {and (controlled) vector field} one can construct a first-order\footnote{\green{A numerical approximation $x_{k+1} = F_h(x_k,u_k)$ for a continuous-time system is called of order $r$ if there exist $K>0$ and some $h_0>0$ such that for all $0<h<h_0$, and  $\norm{x(t_{k+1})-F_h(x(t_k),u_k))}/h\leq Kh^r$, where $t_{k+1}= t_k+h$ and $t\mapsto x(t)$ is a solution of the continuous-time system \cite{blanes2017concise}.}} discretization scheme. 


\begin{prop}[Discretization of vector fields \cite{retr_disc_map}]
\label{first_desc}
For each \blue{fixed $u\in \mathbb{R}^m$}, let $M\ni x\mapsto \F_u(x)\coloneqq(x,F(x,u))\in\T M$ be a controlled vector field on $M$. Then for a given step size $h>0$ and for each $k\in\N$ 
\begin{equation*}
\discmap^{-1}(x_k, x_{k+1})=h\F_{u_k}\underbrace{( \pi_M( \discmap^{-1}(x_k, x_{k+1})) )}_{\in M}
\end{equation*}
is  a first-order discretization of $\dot{x} = F(x,u)$ with $x_k\approx x(t_k)$, where the sequence $\{t_{k}\mid k\in\N, t_{k+1}=t_k+h\}$ denotes the time instances at which states are sample and $x(t_k)$ is the exact trajectory of $\dot{x}= F(x,u)$\footnote{Here we have considered the general nonlinear form as the assertions made here hold true for any nonlinear system.}.  
\end{prop}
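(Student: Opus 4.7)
The strategy is to work in a local chart around $x_k$ and exploit the strong constraints that the two defining properties of a discretization map place on the Taylor expansion of $\discmap$ about the zero section. Writing $(z, v) \coloneqq \discmap^{-1}(x_k, x_{k+1})$, the scheme demands simultaneously $v = h F(z, u_k)$, $\discmap^1(z, v) = x_k$, and $\discmap^2(z, v) = x_{k+1}$, so everything reduces to controlling first-order Taylor remainders of $\discmap^1$ and $\discmap^2$ in the fibre variable.

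First, I would establish that the scheme defines $x_{k+1}$ uniquely for $h$ small. Introduce the auxiliary map $\Phi_h(z) \coloneqq \discmap^1(z, h F(z, u_k))$. Property (1) gives $\Phi_0(z) = \discmap^1(z, 0_z) = z$, so $\Phi_0 = \mathrm{id}$; by the implicit function theorem, $\Phi_h$ is a local diffeomorphism for $|h|$ small and there is a unique $z = \Phi_h^{-1}(x_k)$ near $x_k$, after which $x_{k+1} \coloneqq \discmap^2(z, h F(z, u_k))$ is well defined and depends smoothly on $(x_k, u_k, h)$.

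Next, I would carry out the local Taylor expansion. Setting $A_i(z) \coloneqq \partial_v \discmap^i(z, 0_z)$, property (2) translates into $A_2(z) - A_1(z) = \mathrm{id}$ on $\T_z M$, and a chart computation gives
\[
\discmap^i(z, v) = z + A_i(z)\,v + O(|v|^2), \qquad i \in \{1, 2\}.
\]
From the $i = 1$ expansion and $\discmap^1(z, v) = x_k$ with $|v| = h|F(z, u_k)| = O(h)$, one reads off $z = x_k + O(h)$, hence $F(z, u_k) = F(x_k, u_k) + O(h)$ by smoothness of $F$. Subtracting the two expansions and using $A_2 - A_1 = \mathrm{id}$ yields
\[
x_{k+1} - x_k = \discmap^2(z, v) - \discmap^1(z, v) = v + O(|v|^2) = h F(x_k, u_k) + O(h^2).
\]

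Finally, I would compare with the exact solution of $\dot x = F(x, u_k)$ through $x_k$ at time $t_k$: a Taylor expansion in time gives $x(t_{k+1}) = x_k + h F(x_k, u_k) + O(h^2)$, so subtracting yields $\norm{x(t_{k+1}) - x_{k+1}} = O(h^2)$, i.e.\ $\norm{x(t_{k+1}) - x_{k+1}}/h = O(h)$, which is precisely the first-order consistency required in the paper's footnote. The main obstacle, and the place where care is needed, is the replacement of $F(z, u_k)$ by $F(x_k, u_k)$: one needs the implicit function theorem bound $z - x_k = O(h)$ together with uniformity of the Taylor remainders of $\discmap^1, \discmap^2$, and $F$ on a compact neighbourhood of $(x_k, 0_{x_k})$, so that the $O(h)$ and $O(h^2)$ constants are genuinely uniform in $h$.
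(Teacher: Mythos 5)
Your argument is correct: the paper states Proposition~\ref{first_desc} without proof (importing it from \cite{retr_disc_map}), and your chart-based Taylor expansion — using property (1) to pin the zeroth-order term, property (2) to get $A_2-A_1=\mathrm{id}$, and the implicit function theorem for well-posedness and the bound $z - x_k = O(h)$ — is exactly the standard route to the $O(h^2)$ local truncation error demanded by the paper's definition of a first-order scheme. Nothing further is needed.
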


\begin{figure}
    \centering
    \begin{tikzpicture}
  \matrix (m) [matrix of math nodes,row sep=3em,column sep=4em,minimum width=2em]
  {
     \T M\ni (x,\dot{x}) & (y,\dot{y})\in\T N \\
     M\times M\ni\discmap(x,\dot x) & \discmap_\phi(y,\dot y)\in N\times N\\};
  \path[-stealth]
    (m-1-1) edge node [left] {$\discmap$} (m-2-1)
            edge node [above] {$\T\phi$} (m-1-2)
    (m-2-1.east|-m-2-2) edge node [below] {$\phi\times\phi$}
            (m-2-2)
    (m-1-2) edge node [right] {$\discmap_{\phi}$} (m-2-2);
\end{tikzpicture}
    \caption{$\discmap$ and $\discmap_\phi$ commute as shown above}
    \label{fig:my_label}
\end{figure}



\blue{\subsection{Main result}}
Proposition~\ref{first_desc} along with Proposition~\ref{r_lift} allows us to construct discretizations that are feedback linearizable as we now demonstrate. The idea is first to construct a discretization scheme for the \brown{linear} system~\eqref{dyn_linear} and then lift it using the diffeomorphism $\extdiff$ to construct discretizations for  \brown{the extended system}~\eqref{ext_sys}. 



Denoting $M \coloneqq \open{x_0}\times\open{w_0}\subset\R[n]\times\R[q]$ and $N\coloneqq \extdiff(\open{x_0}\times\open{w_0})$, 
\brown{where $\extdiff$ is defined by~\eqref{phi_1}, the map} 
$\extdiff\colon M\lra N$ is then a diffeomorphism. Let $$\T N\ni(\linstate,\dot\linstate)\mapsto \discmap_\extdiff(\linstate,\dot\linstate)\in N\times N$$  be a \green{discretization map} such that the induced discretization of~\eqref{dyn_linear}
\begin{equation*}
\discmap_{\extdiff}^{-1}(\linstate_{k},\linstate_{k+1}) = h\Flin_{v_k}\Big(\pi_N(\discmap^{-1}_\extdiff((\linstate_{k},\linstate_{k+1})))\Big)
\end{equation*}
where for each \brown{fixed} $v$, $N\ni z\mapsto \Flin_{v}(\linstate) = (z,A\linstate+Bv)\in\T N$ is of the form
\begin{equation}
\label{dis_lin}
\begin{split}
    \linstate_{k+1} &= A_h\linstate_k+B_hv_k
\end{split}    
\end{equation}
where $A_h\in\R[(n+q)\times(n+q)]$, $B_h\in\R[(n+q)\times m]$ are fixed matrices.

\begin{thm}
\label{main_result}
Consider a \brown{dynamic feedback linearizable} continuous-time system given by~\eqref{ctssys}, such that its dynamic extension~\eqref{ext_sys} \brown{can be transformed into the linear system}~\eqref{dyn_linear}. Let $\extdiff$ be as in~\eqref{phi_1} \brown{and} $\discmap_\extdiff$ be a discretization map on $N$ resulting in a discretization scheme~\eqref{first_desc}. \brown Then there exists a discretization map on $M$ given by
\begin{equation}
\label{dis_nonlin}
    \discmap \coloneqq (\extdiff^{-1}\times\extdiff^{-1})\circ \discmap_\extdiff\circ \T\extdiff
\end{equation}
inducing  the following first-order discretization scheme on~\eqref{ext_sys} 
\begin{equation}
\label{retr_desc_lin1}
\discmap^{-1}(\extendedstate_{k},\extendedstate_{k+1})= h\F_{\extendedcontrol_{k}}\Big(\pi_M(\discmap^{-1}(\extendedstate_{k},\extendedstate_{k+1}))\Big), 
\end{equation}
where for each \brown{fixed} $\extendedcontrol$,  $M\ni\extendedstate\mapsto \F_{\extendedcontrol}\coloneqq (\extendedstate, F(\extendedstate)+G(\extendedstate)\extendedcontrol)\in\T M$, and $h\F_{\extendedcontrol}(\extendedstate) \coloneqq (\extendedstate,h(F(\extendedstate)+G(\extendedstate)\extendedcontrol))$,
such that it is (static) feedback linearizable in the discrete-time sense. Moreover, the discrete linearizing feedback is given by 
\begin{equation}
\label{lin_feed_disc}
\extendedcontrol_k =  \extctra(\extendedstate_k)+\extctrb(\extendedstate_k)v_k.
\end{equation}
\end{thm}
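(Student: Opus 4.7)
The proof splits naturally into three pieces, and the plan is to handle them in order: confirm that $\discmap$ is a bona fide discretization map on $M$, confirm that~\eqref{retr_desc_lin1} really is a first-order scheme for~\eqref{ext_sys}, and then verify that under the coordinate change $\linstate = \extdiff(\extendedstate)$ and the feedback $\extendedcontrol = \extctra(\extendedstate) + \extctrb(\extendedstate)v$ the closed-loop discrete system reduces to a linear time-invariant recursion in $\linstate$.

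The first step is essentially free: since $\extdiff\colon M\to N$ is a diffeomorphism (by the definition of dynamic feedback linearization), Proposition~\ref{r_lift} applied with the roles of $M$ and $N$ interchanged gives that $(\extdiff^{-1}\times\extdiff^{-1})\circ\discmap_\extdiff\circ\T\extdiff$ is a discretization map on~$M$. The second step is also immediate: once $\discmap$ is a discretization map, Proposition~\ref{first_desc} applied to the controlled vector field $\F_\extendedcontrol(\extendedstate) = (\extendedstate, F(\extendedstate)+G(\extendedstate)\extendedcontrol)$ yields a first-order discretization, which is precisely~\eqref{retr_desc_lin1}.

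The core of the argument is step three. Here I would apply $\T\extdiff$ to both sides of~\eqref{retr_desc_lin1}. On the left, the identity $\T\extdiff\circ\discmap^{-1} = \discmap_\extdiff^{-1}\circ(\extdiff\times\extdiff)$, which is exactly what the commutative diagram of Figure~\ref{fig:my_label} encodes, rewrites $\T\extdiff\bigl(\discmap^{-1}(\extendedstate_k,\extendedstate_{k+1})\bigr)$ as $\discmap_\extdiff^{-1}(\linstate_k,\linstate_{k+1})$. On the right, setting $p \coloneqq \pi_M(\discmap^{-1}(\extendedstate_k,\extendedstate_{k+1}))$ and using naturality of $\pi$ one gets $\T\extdiff(h\F_{\extendedcontrol_k}(p)) = (\extdiff(p), h\,\D\extdiff(p)(F(p)+G(p)\extendedcontrol_k))$. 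Substituting the feedback~\eqref{lin_feed_disc} and invoking the defining relations $A\extdiff(\extendedstate) = \D\extdiff(\extendedstate)(F(\extendedstate)+G(\extendedstate)\extctra(\extendedstate))$ and $B = \D\extdiff(\extendedstate)G(\extendedstate)\extctrb(\extendedstate)$ of the continuous-time linearization turns the right-hand side into $h\Flin_{v_k}(\extdiff(p)) = h\Flin_{v_k}(\pi_N(\discmap_\extdiff^{-1}(\linstate_k,\linstate_{k+1})))$. The resulting identity is exactly the induced discretization of $\dot\linstate = A\linstate + Bv$, which by hypothesis has the closed form $\linstate_{k+1} = A_h\linstate_k + B_h v_k$ in~\eqref{dis_lin}, establishing discrete-time feedback linearizability.

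The step I expect to be delicate is evaluating the linearizing feedback: the formal calculation above naturally wants to use $\extctra$ and $\extctrb$ evaluated at $p = \pi_M(\discmap^{-1}(\extendedstate_k,\extendedstate_{k+1}))$, whereas the theorem asserts the feedback at the measured state $\extendedstate_k$. For the broad class of retraction-induced discretization maps, where $\discmap(\extendedstate,\dot\extendedstate) = (\extendedstate, R(\extendedstate,\dot\extendedstate))$, one has $\pi_M\circ\discmap^{-1}(\extendedstate_k,\extendedstate_{k+1}) = \extendedstate_k$ exactly, and the two coincide; for general discretization maps the two evaluation points differ by $O(h)$, so the conclusion is consistent with the first-order accuracy of the scheme. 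I would make this point explicit in the proof, so that the linearizing feedback~\eqref{lin_feed_disc} is understood within the first-order discretization framework in which the theorem is stated.
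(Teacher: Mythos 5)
Your proof follows essentially the same route as the paper's: lift $\discmap_\extdiff$ to $M$ via Proposition~\ref{r_lift}, invoke Proposition~\ref{first_desc} for the first-order scheme, then push~\eqref{retr_desc_lin1} through $\T\extdiff$ and the feedback to recover the linear recursion~\eqref{dis_lin}. Your closing remark about where $\extctra,\extctrb$ are evaluated is a genuine subtlety that the paper's proof silently passes over --- it substitutes the feedback at $\extendedstate_k$ while the chain of identities actually requires evaluation at $\pi_M(\discmap^{-1}(\extendedstate_k,\extendedstate_{k+1}))$ --- and your observation that the two coincide exactly for retraction-induced discretization maps and differ by $O(h)$ in general is a worthwhile clarification rather than a deviation.
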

\begin{proof}
Since $\discmap_\extdiff$ is a discretization map on $N$, using Proposition~\ref{r_lift}, $\discmap$ is a discretization map on $M$. Further, using \red{Proposition~\ref{first_desc}}, we have

\begin{equation}
    \discmap^{-1}(\extendedstate_{k},\extendedstate_{k+1}) = h\F_{\extendedcontrol_k}\big(\pi_M(\discmap^{-1}(\extendedstate_k,\extendedstate_{k+1}))\big).
\end{equation}
From this, \brown{it follows}
\begin{equation*}
(\extendedstate_k,\extendedstate_{k+1})= \discmap\Big(h\F_{\extendedcontrol_k}\big(\pi_M(\discmap^{-1}(\extendedstate_k,\extendedstate_{k+1}))\big)\Big)    
\end{equation*}
which implies
\begin{multline*}
(\extdiff\times\extdiff)(\extendedstate_k,\extendedstate_{k+1})\\= (\extdiff\times\extdiff)\left(\discmap\Big(h\F_{\extendedcontrol_k}\big(\pi_M(\discmap^{-1}(\extendedstate_k,\extendedstate_{k+1}))\big)\Big)\right).
\end{multline*}
Substituting \red{$\discmap$ by its expression~\eqref{dis_nonlin}} in the above equation, we have 
\begin{equation*}
\begin{split}
(\linstate_{k},\linstate_{k+1}) &=  \discmap_{\extdiff}\circ\T\extdiff\circ h\F_{\extendedcontrol_k}\big(\pi_M(\T\extdiff^{-1}\circ \discmap_{\extdiff}^{-1}(\linstate_k,\linstate_{k+1}))\big).
\end{split} 
\end{equation*}
Substituting $\extendedcontrol_k = \extctra(\extendedstate_k)+\extctrb(\brown{\extendedstate_k})v_k$, we \brown{get} 
\begin{multline*}
\T\extdiff\circ\F_{\mu_k}\big(\pi_M(\T\extdiff^{-1}\circ \discmap_{\extdiff}^{-1}(\linstate_k,\linstate_{k+1}))\big)= \\
\Flin_{v_k}\big(\pi_N(\discmap_{\extdiff}^{-1}(z_k,z_{k+1}))\big)
\end{multline*}
where $\Flin_{v_k}(\linstate_k) = (z_k,A\linstate_k+Bv_k)\in\T N$ and $(\linstate,\dot\linstate)\mapsto \pi_N(\linstate,\dot\linstate)=\linstate$ is the canonical projection on to $N$. Thus, we have,
\begin{equation*}
    \discmap_{\extdiff}^{-1}(\linstate_k,\linstate_{k+1}) = h\Flin_{v_k}\Big(\pi_{N}(\discmap_{\extdiff}^{-1}(\linstate_k,\linstate_{k+1}))\Big).
\end{equation*}
Since $\discmap_\extdiff$ induces a discretization that preserves linearity, we \brown{finally obtain}
\begin{equation*}
\begin{split}
    \linstate_{k+1} = A_h\linstate_k+B_hv_k,
\end{split}
\end{equation*}
thereby concluding the proof.
\end{proof}
The discrete-time evolution is obtained by solving~\eqref{retr_desc_lin1} implicitly for $\extendedstate_{k+1}$, for each given $\extendedstate_k$ and $\extendedcontrol_k$, $k\in\N$. Writing it in closed form\footnote{Existence of $\bar F_h$ in a local neighborhood is guaranteed using the Implicit Function Theorem \cite[Theorem~4.B]{zeidler1993vol}, however, the actual closed-form expression of $\bar F_h$ is not available in general.} 
\begin{equation*}
    \extendedstate_{k+1} = \bar F_h\blue({\extendedstate_k,\extendedcontrol_k}\blue)
\end{equation*}

and 
\begin{equation*}
    x_{k+1} = F_h(x_{k},w_k,\extendedcontrol_k)
\end{equation*}
where $F_h(x_k,w_k,\extendedcontrol_k)$ is \brown{given by} the first $n$ tuples of $\bar{F}_h(\extendedstate,\extendedcontrol)$ corresponding to $x_{k+1}$. \brown The process of obtaining a feedback linearizable discretization for~\eqref{ctssys} can be summarized as a representational commutative diagram shown in Figure~\ref{fig:my_label2}.
\begin{figure}
    \centering
    \begin{tikzpicture}
  \matrix (m) [matrix of math nodes,row sep=3em,column sep=4em,minimum width=2em]
  {
     \dot{x}=f(x)+g(x)u&\\\dot{\extendedstate}= F(\extendedstate)+G(\extendedstate)\extendedcontrol &  \dot\linstate = A\linstate+B\extendedcontrol\\       \extendedstate_{k+1}= \bar F_h(\extendedstate_k,\extendedcontrol_k)& \linstate_{k+1}= A_h\linstate_k+B\extendedcontrol_k \\ x_{k+1} = F_h(x_k,w_k,\extendedcontrol_k)&\\};
  \path[-stealth]
    (m-1-1) edge node [left] {} (m-2-1)
    (m-2-1) edge node [left] {$\discmap$} (m-3-1)
            edge node [above] {$\extdiff,\Psi$} (m-2-2)
    (m-2-1) edge node [below] {CTLS} (m-2-2)
    (m-3-1.east|-m-3-2) edge node [below] {DTLS}
            (m-3-2)
    (m-3-1) edge node [above] {$\extdiff,\Psi$} (m-3-2)
    (m-2-2) edge node [right] {$\discmap_{\extdiff}$} (m-3-2);
    \path[->]
    (m-4-1) edge node [left] {} (m-3-1);    
\end{tikzpicture}
    \caption{Schematic representation of constructing a feedback linearizable discretization scheme for~\eqref{ext_sys}. Both~\eqref{ext_sys} and its discretization~\eqref{retr_desc_lin1} are linearizable by coordinate \brown{change} $\linstate\coloneqq \extdiff(\extendedstate)$ and feedback $\extendedcontrol \coloneqq \extfeed(\brown \extendedstate,v)= \extctra(\extendedstate)+\extctrb(\extendedstate)v$ (CTLS- Continuous-time linear system, DTLS - Discrete-time linear system).} 
    \label{fig:my_label2}
\end{figure}
\section{Illustrating Theorem~\ref{main_result} on Example~\ref{ex_1}}
\label{sec:illustrating_ex}
We apply the results of Theorem~\ref{main_result} on Example~\ref{ex_1} and construct a feedback linearizable discretization scheme \red{for~\eqref{unicycle_extended}.}

Define $\linstate \coloneqq(z_1,z_2,z_3,z_4,z_5)$ and consider \brown{the following} discretization map on $N$, $$(\linstate,\dot \linstate)\mapsto \discmap_\extdiff(\linstate,\dot \linstate) \coloneqq (\linstate,\linstate+\dot \linstate),$$
then the associated discretization of~\eqref{unicycle_lin} is given by 
\begin{equation}
\label{ex1_dlin}
\begin{split}
\linstate^1_{k+1} &= \linstate^1_{k}+h\linstate^2_{k}\\
\linstate^2_{k+1} &= \linstate^2_{k}+h\linstate^4_{k}\\
\linstate^3_{k+1} &= \linstate^3_{k}+hv^1_{k}\\
\linstate^4_{k+1} &= \linstate^4_{k}+h\linstate^5_{k}\\
\linstate^5_{k+1} &= \linstate^5_{k}+hv^2_{k}
\end{split}    
\end{equation}
where, for each $k\in\N$, $\linstate_k \coloneqq (\linstate^1_k,\linstate^2_k,\linstate^3_k,\linstate^4_k,\linstate^5_k)\approx \linstate(t_k)$ is the approximated state trajectory at $t_k$ and $v_k\coloneqq(v^1_k,v^2_k)$ is the applied control over the interval $[t_{k},t_{k+1}[$, $t_{k+1}= t_k+h$, with $h$ being the sampling period.


Lifting \red{$\discmap_\extdiff$} onto $M$, define $$(\extendedstate,\dot\extendedstate)\mapsto \discmap(\extendedstate,\dot\extendedstate) \coloneqq \big((\extdiff^{-1}\times\extdiff^{-1})\circ \discmap_\extdiff \circ \T\extdiff\big)(\extendedstate,\dot\extendedstate),$$ where $\extendedstate\coloneqq (x,w)\in\R[4]\times\R$, and $\dot\extendedstate\coloneqq(\dot x,\dot w)\in\R[4]\times\R$.  Then $\discmap$ induces the following discretization scheme on~\eqref{unicycle_extended}:
\begin{equation}
\label{ex1_disc_nonlin}
\extendedstate_{k+1} = \extdiff^{-1}\Big(\extdiff(\extendedstate_k) + h\big(\D\extdiff(\extendedstate_k)\cdot(F(\extendedstate_k)+G(\extendedstate_k)\extendedcontrol_k)\big)\Big).
\end{equation}
For all $k\in\N$, defining $z_{k} \coloneqq \extdiff(\extendedstate_{k})$ and 
using a feedback control 
$\extendedcontrol_k \coloneqq \extctra(\extendedstate_k) + \extctrb(\extendedstate_k)v_k$,
equation~\eqref{ex1_disc_nonlin} can be transformed to \brown{the} linear discrete-time system~\eqref{ex1_dlin}.

\section{Simulation Results}
We demonstrate the discretizing scheme by implementing \brown{it} on a stabilizing problem. The initial condition was chosen as $\extendedstate(0) = (x(0),w(0)) = ((0.5,0.2,0.1,0.2),0)$. The scheme was simulated over an interval of $10$ seconds with a stepsize of $h=10^{-2}$ seconds. The stabilizing control law was chosen as 
$$\pmat{\extendedcontrol_k^1\\\extendedcontrol_k^2}\coloneqq \pmat{1&x_k^4\\ w_k& 1+x_k^3}^{-1}\pmat{v_k^1-(1+x_k^3)(w_k)^2\\v_k^2},$$
with $v_k \coloneqq (v_k^1,v_k^2) = (-(10z_k^1+10z_k^2+10z_k^3),-(10z_k^4+10z_k^5))$ and \red{$z_k = \extdiff(\extendedstate_k)$}. 


The state and control plots for the discretized system are shown in Figures~\ref{fig:state_ex1} and~\ref{fig:control_ex1} respectively. The global error $\norm{\extendedstate(t_k)-\extendedstate_k}$, where $\extendedstate(t_k)$ is the exact trajectory 
(obtained by ODE45 solver of MATLAB) of~\eqref{unicycle_extended} sampled at $t_k$, is plotted in Figure~\ref{fig:error_ex1}. 
\begin{figure}
    \centering
   \includegraphics[width=\linewidth]{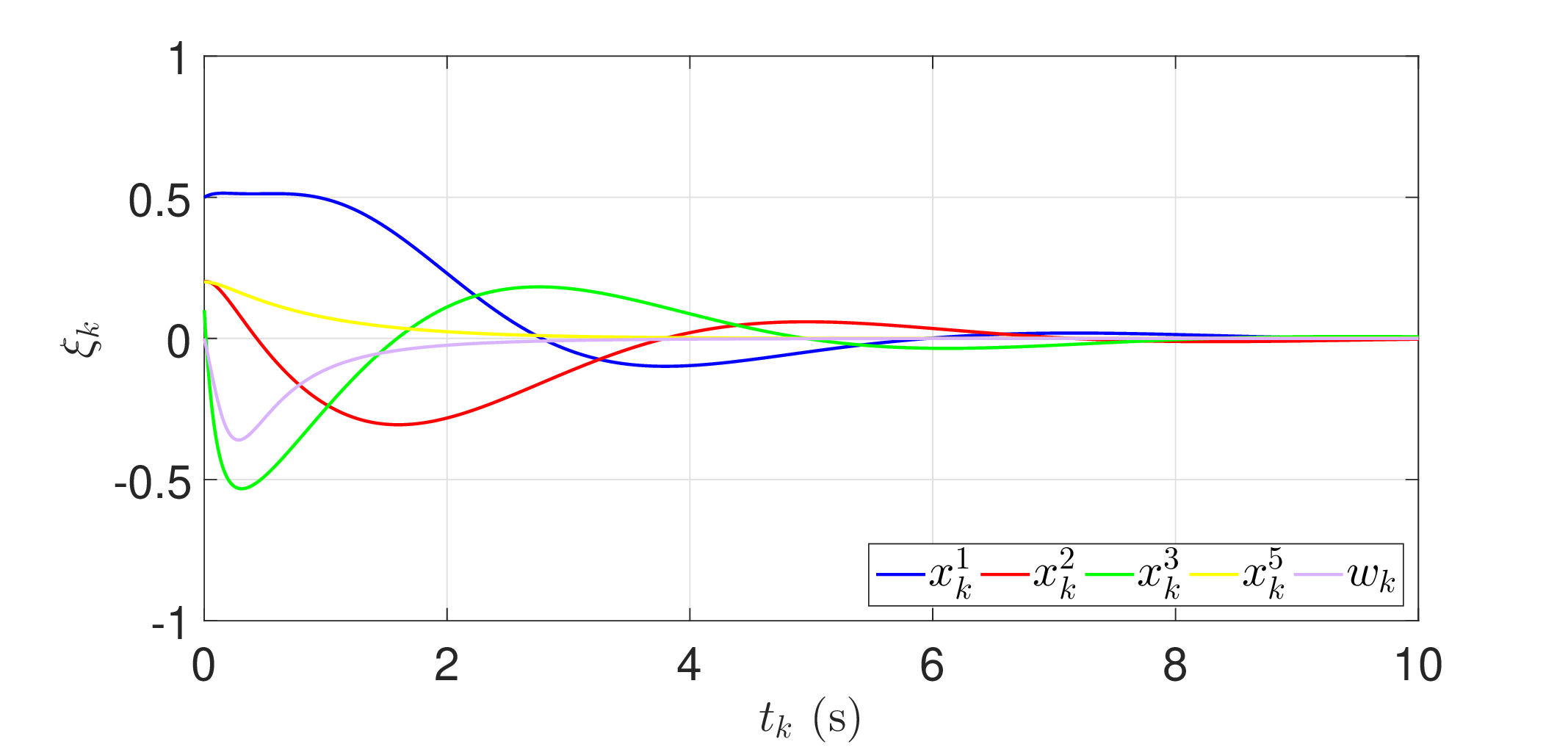}
    \caption{System states $\xi_k\coloneqq (x_k,w_k)$ for~\eqref{ex1_disc_nonlin} for a stepsize $h=10^{-2}$ and $t_k\in[0,10]$.}
    \label{fig:state_ex1}
\end{figure}
\begin{figure}
    \centering
   \includegraphics[width=\linewidth]{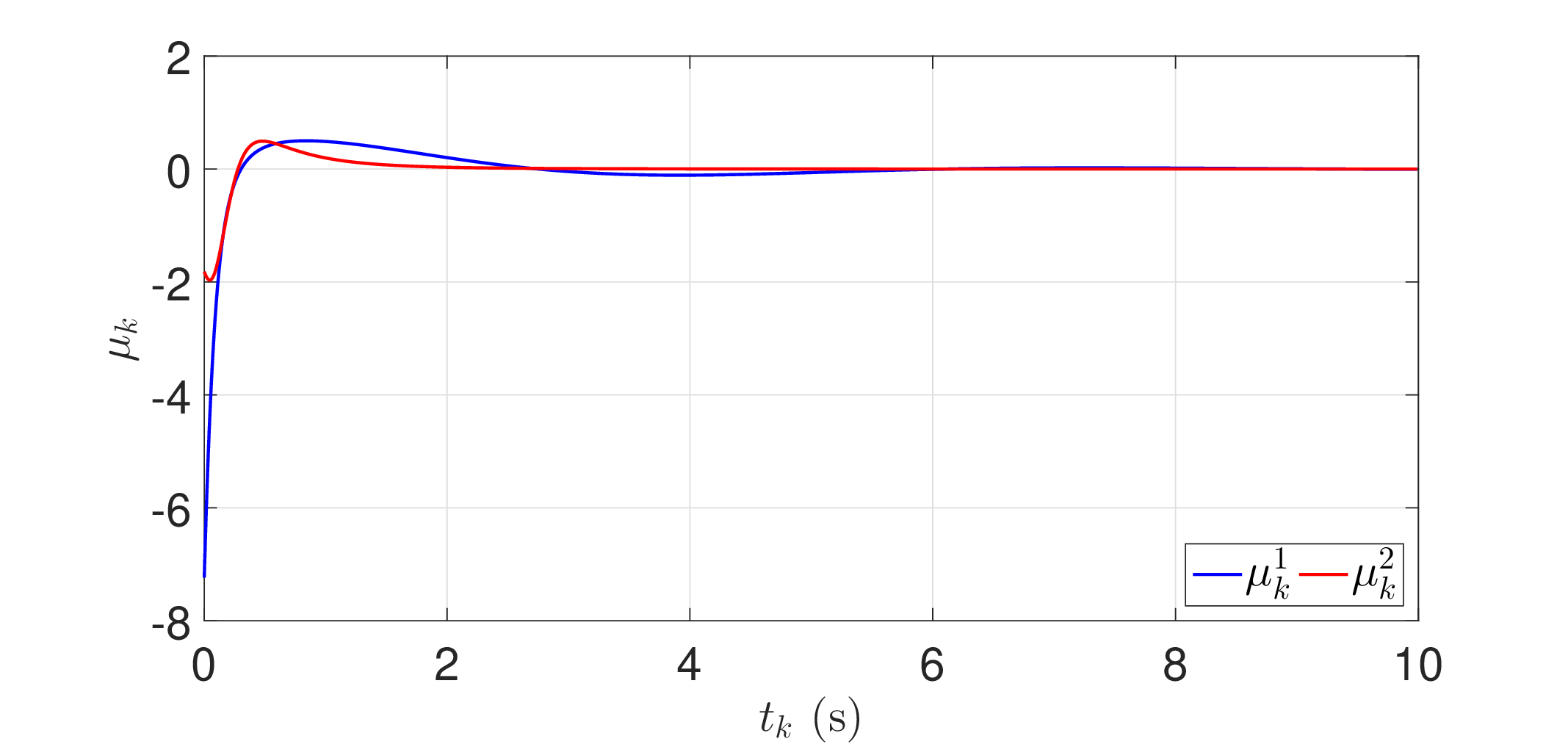}
    \caption{Control input $\mu_k\coloneqq (\mu_k^1,\mu_k^2)$ for~\eqref{ex1_disc_nonlin} for a stepsize $h=10^{-2}$ and $t_k\in[0,10]$.}
    \label{fig:control_ex1}
\end{figure}
\begin{figure}
    \centering   \includegraphics[width=\linewidth]{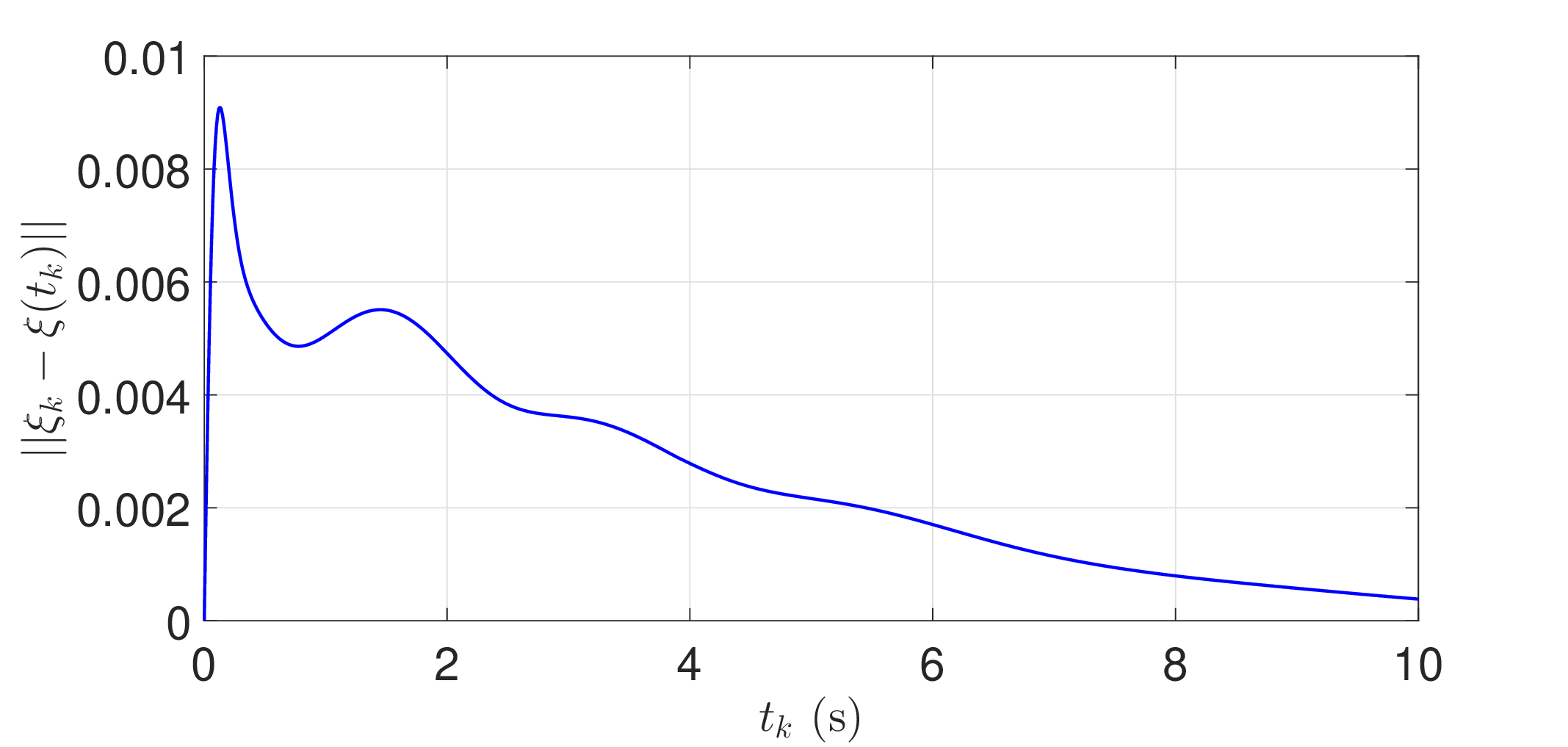}
    \caption{Global Error  $\norm{\extendedstate(t_k)-\extendedstate_k}$ for~\eqref{ex1_disc_nonlin}, for a stepsize $h=10^{-2}$ and $t_k\in[0,10]$.}
    \label{fig:error_ex1}
\end{figure}
\section{Conclusion}
In this article, we have extended the results of \cite{retr_disc_map} to the dynamical feedback linearizable systems. Theorem~\ref{main_result} allows us to construct discretization schemes that are feedback linearizable for a class of nonlinear systems linearizable by dynamic feedback. One of the key features of the results presented here is that we do not assume any invertibility \brown{property} on the type of feedback considered. This allows the result to apply to a class of systems including both endogenous and exogenous feedback. Although the results are presented here for the control-affine form, they hold for general nonlinear systems. To illustrate our results, we implement this on a stabilization problem of a dynamical feedback linearizable system. The simulation was run for 10 seconds and the trajectories and error magnitudes were plotted. From the error plot one can see that for a stepsize $h=10^{-2}$ seconds the error is fairly of the order of $10^{-2}$. As a future work similar to \cite{retr_disc_map}, one can construct higher order discretization by using multistep discretization while preserving feedback linearizability.
\section*{Appendix}
Consider the Euler Discretization~\eqref{uni_dis_eul} for system~\eqref{unicycle_extended}. Denoting it compactly, the discrete-time system is given by 
\begin{equation}
    \extendedstate_{k+1} = F_{h}(\extendedstate_k,\extendedcontrol_k).
\end{equation}
One now utilizes the necessary and sufficient conditions from \cite{grizzle1986feedback} to prove the feedback linearizability of~\eqref{uni_dis_eul}. In this direction, the Jacobian of $F_h$ is given by $\D F_h(x,w,\extendedcontrol) = \pmat{\pdf{F_h}{x}(x,w,\extendedcontrol) &\pdf{F_h}{u}(x,w,\extendedcontrol)}$ with
\begin{multline*}
    \pdf{F_h}{x}(x,w,\extendedcontrol) =\\ \pmat{1&h(1+2(x_3+x_4w))&2hx_2&2hx_2w&2hx_2x_4\\
          0&1&h&w&x_4\\
     0&0&1&0&0\\
     0&0&hw&1&h(1+x_3)\\
     0&0&0&0&1}
\end{multline*}
\begin{equation*}
   \pdf{F_h}{u}(x,w,\extendedcontrol) = 
   \pmat{0&0&h&0&0\\
         0&0&0&0&h}^\top 
\end{equation*}
and its Kernel distribution is given by 
\begin{equation*}
    \K = \colspan
\left(\pmat{\star_{1}&\star_{2}&0&h^2(1+x_3)&-h&0&1\\\star_3&\star_4&-h&h^2w&0&1&0}^\top\right)
\end{equation*}
where $\star_1 \coloneqq -h(1+2(x_3+x_4w))(h^2x_4-h^3w(1+x_3))-2h^3x_2w(1+x_3)+2h^2x_2x_4$, $\star_2 \coloneqq -h^3w(1+x_3)+h^2x_4$, $\star_3 \coloneqq -h(1+2(x_3+x_4w))(h^2-h^3w)+2x_2(h^2-h^3w)$ and $\star_4\coloneqq h^2-h^3w $. Initiating a sequence of distribution as given in Theorem~3.1 form \cite{grizzle1986feedback} we have 
\begin{equation*}
    \dis{0} = \colspan\left(\pmat{0&0&0&0&0&1&0\\0&0&0&0&0&0&1}^\top\right)
\end{equation*}
Since $\dis{0}+\K$ is involutive and $\dis{0}\cap\K=\{0\}$, being the zero distribution, is constant dimensional, we have 
\begin{equation*}
    \dis{1} = \colspan\left(\pmat{0&0&0&0&0&1&0\\0&0&0&0&0&0&1\\
    0&0&0&0&h&0&0\\0&0&h&0&0&0&0}^\top\right)
\end{equation*}
involutive distribution. However, 
\begin{multline*}
    \dis{1}+\K =\\ \colspan\left(\pmat{0&0&0&0&0&1&0\\0&0&0&0&0&0&1\\
    0&0&0&0&h&0&0\\0&0&h&0&0&0&0\\\star_{1}&\star_{2}&0&h^2(1+x_3)&-h&0&1\\\star_3&\star_4&-h&h^2w&0&1&0}^\top\right)
\end{multline*}
is not involutive, therefore using Theorem~3.1 from \cite{grizzle1986feedback}, \brown{we conclude that}~\eqref{uni_dis_eul} is not feedback linearizable.

\bibliographystyle{IEEEtran}
\bibliography{IEEEabrv,name1}
\end{document}